\definecolor{webgreen}{rgb}{0,.5,0}
\definecolor{webbrown}{rgb}{.6,0,0}
\title{Lengths of Words Accepted by Nondeterministic Finite Automata}
\author{
Aaron Potechin\footnote{This work was supported by the Simons Collaboration for Algorithms and Geometry and by the NSF under agreement No.~CCF-1412958.}\\
KTH Royal Institute of Technology\\
Stockholm 114-28\\
Sweden\\
\href{mailto:aaronpotechin@gmail.com}{\tt aaronpotechin@gmail.com}   \\
\ \\
Jeffrey Shallit\footnote{Supported in part by NSERC Grant 105829/2013.} \\
School of Computer Science \\
University of Waterloo \\
Waterloo, ON  N2L 3G1 \\
Canada \\
\href{mailto:shallit@uwaterloo.ca}{\tt shallit@uwaterloo.ca}
}
\begin{document}

\theoremstyle{plain}
\newtheorem{theorem}{Theorem}
\newtheorem{corollary}[theorem]{Corollary}
\newtheorem{lemma}[theorem]{Lemma}
\newtheorem{proposition}[theorem]{Proposition}

\theoremstyle{definition}
\newtheorem{definition}[theorem]{Definition}
\newtheorem{example}[theorem]{Example}
\newtheorem{conjecture}[theorem]{Conjecture}

\theoremstyle{remark}
\newtheorem{remark}[theorem]{Remark}

\maketitle

\begin{abstract}
We consider two natural problems about nondeterministic finite automata.
First, given such an automaton $M$ of $n$ states, and a length $\ell$, does
$M$ accept a word of length $\ell$?  
We show that the classic problem of triangle-free graph recognition
reduces to this problem, and give an 
$O(n^\omega (\log n)^{1+\epsilon} \log \ell)$-time
algorithm to solve it, where $\omega$ is the optimal exponent
for matrix multiplication.  Second, provided $L(M)$ is finite, we consider
the problem of listing the lengths of {\it all\/} words accepted by $M$.
Although this problem seems like it might be significantly
harder, we show that this problem
can be solved in $O(n^\omega (\log n)^{2+\epsilon})$ time.
Finally, we give a connection between NFA acceptance and the strong
exponential-time hypothesis.
\end{abstract}


\section{Introduction}
\label{one}

A nondeterministic finite automaton (NFA) $A = (Q, \Sigma, \delta, q_0, F)$
consists of a finite, nonempty set of states $Q = \{ q_0, q_1, \ldots,
q_{n-1} \}$, an input alphabet
$\Sigma$, an initial state $q_0$, a set $F \subseteq Q$ of final states,
and a transition function $\delta: Q \times \Sigma \rightarrow 2^Q$.
This transition function is then extended in the usual way,
to the domain $Q \times \Sigma^*$.  The language accepted by an NFA $A$
is defined to be
$$ L(A) = \{ x \in \Sigma^* \ : \ \delta(q_0,x) \, \cap \, F \not= \emptyset \} .$$
Our NFA's do not have
$\epsilon$-transitions.  The {\it transition diagram} of an NFA $A$
is the directed graph $G = G(A)$ with source $q_0$,
sink vertices given by $F$,
and directed edge from $p$ to $q$ labeled $a$ if $\delta(p,a) = q$.
An NFA is {\it unary} if its input alphabet $\Sigma$ consists of a 
single letter.  
For more information about the model, 
the reader can consult, for example, \cite{Hopcroft&Ullman:1979}.

Without loss of generality, we can assume all NFA's under discussion are
{\it initially connected} (i.e., every state is reachable from the start
state $q_0$) and that a final state is reachable from every state.
Note that both properties are testable for an NFA $A$ in time linear in the
number of edges in its transition diagram.

An NFA $A$ is {\it acyclic} if its transition diagram
has no cycles, or, equivalently, if $L(A)$ is finite.
Note that if an $n$-state NFA is acyclic, then $L(A) \subseteq (\Sigma \, \cup \, \{\epsilon \})^{n-1}$.

In this note, we consider the following natural problem about NFA's:

\bigskip

\noindent{\tt NFA LENGTH ACCEPTANCE}

\medskip

\noindent {\it Instance:}  An $n$-state NFA $A = (Q, \Sigma, \delta, q_0, F)$ 
and a length $\ell$.

\medskip

\noindent {\it Question:}  Does $A$ accept a word of length $\ell$?

\bigskip

\begin{proposition}
{\tt NFA LENGTH ACCEPTANCE} can be solved in $O(n^\omega (\log n)^{1+\epsilon}
(\log \ell))$ time. 
\end{proposition}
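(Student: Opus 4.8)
The plan is to reduce \texttt{NFA LENGTH ACCEPTANCE} to computing a single power of a Boolean matrix. The first observation is that the input labels are irrelevant: since a computation of $A$ is free to use whatever letters label the edges it traverses, $A$ accepts \emph{some} word of length $\ell$ if and only if the transition diagram $G(A)$ contains a walk of length exactly $\ell$ from $q_0$ to some state of $F$. Indeed, reading off the edge labels of such a walk gives an accepted word of length $\ell$, and conversely an accepting computation on a length-$\ell$ word traces out such a walk. So I would form the $n \times n$ Boolean adjacency matrix $M$ of $G(A)$, in which $M_{ij} = 1$ precisely when $q_j \in \delta(q_i,a)$ for some $a \in \Sigma$. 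Then the answer to the instance is ``yes'' if and only if $(M^\ell)_{0j} = 1$ for some $q_j \in F$, where $M^0$ is the identity; note this correctly handles $\ell = 0$, the empty word being accepted iff $q_0 \in F$.

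Next I would compute the Boolean power $M^\ell$ by repeated squaring on the binary representation of $\ell$, which uses $O(\log \ell)$ Boolean matrix multiplications of $n \times n$ matrices. The essential point is never to let the matrix entries grow: after each multiplication I threshold every entry of the product back to $\{0,1\}$, so that every matrix ever formed has entries in $\{0,1,\dots,n\}$, i.e., $O(\log n)$ bits. (Equivalently, one may carry out all arithmetic over $\mathbb{Z}/p$ for a fixed prime $p > n$, which recovers the $\{0,\dots,n\}$-valued true product exactly.) A single such matrix multiplication costs $O(n^\omega)$ arithmetic operations on $O(\log n)$-bit integers, and an arithmetic operation on $b$-bit integers can be done in $O(b^{1+\epsilon})$ time; hence each Boolean matrix multiplication, thresholding included, takes $O(n^\omega (\log n)^{1+\epsilon})$ time.

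Putting this together with the $O(\log \ell)$ squaring steps, and noting that building $M$ and then inspecting row $0$ of $M^\ell$ take only time polynomial in the input (dominated by the above), yields the claimed $O(n^\omega (\log n)^{1+\epsilon} \log \ell)$ bound. I do not expect any genuine obstacle here: the proof is short. The one place demanding care is precisely the bit-complexity accounting — a naive ``raise the integer matrix to the $\ell$th power'' approach would produce entries with $\Omega(\ell)$ bits (the number of walks can grow exponentially), so the thresholding (or mod-$p$) trick that keeps entries $O(\log n)$ bits long, together with an honest charge of the resulting polylogarithmic overhead against the matrix-multiplication exponent $\omega$, is what makes the stated running time come out.
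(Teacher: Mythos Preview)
Your proposal is correct and follows essentially the same approach as the paper: form the Boolean adjacency matrix, compute $M^\ell$ by binary exponentiation in $O(\log \ell)$ steps, and keep the entry sizes at $O(\log n)$ bits after each multiplication (the paper reduces modulo $n+1$ and then thresholds, which is just your thresholding/mod-$p$ trick). Your write-up is in fact a bit more explicit about why the edge labels can be ignored and about the bit-complexity bookkeeping, but there is no substantive difference.
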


\begin{proof}
First, we create a boolean adjacency 
matrix $M = M(A)$ with a $1$ in
row $i$ and column $j$ if there is a letter $a$ such that
$q_j \in \delta(q_i, a)$.  Then standard results on path algebras
imply that $A$ accepts a word of length $\ell$ if and only if
$(M^\ell)_{0,j} = 1$ for some $j$ such that $q_j \in F$.  
A single Boolean matrix multiplication can be carried out using
the usual matrix multiplication algorithms modulo $n+1$, and then 
converting each element that is $\geq 1$ to $1$.  This involves
arithmetic on integers of $\log n$ bits (which can be done in
$(\log n)^{1 + \epsilon}$ time).  Raising $M$ to the $\ell$ power can
be done using the usual ``binary method of exponentiation'' 
(see, e.g., \cite[\S 4.6.3]{Knuth:1981}) with
$\log \ell$ matrix multiplications.  
\end{proof}

In the next section, we prove a lower bound on the
complexity of this problem, by reducing from the classic problem of
triangle-free graph recognition.  The same reduction works even if
our NFA is restricted to be over a unary alphabet, and even if
it is required to be acyclic.  In Section~\ref{alg} we discuss the
problem of listing all the elements of $L(A)$ when $A$ is a unary
acyclic NFA.

\section{A lower bound}
\label{lower}

In this section, we show that 
the classic problem of determining whether
an undirected graph is triangle-free reduces to  
{\tt NFA LENGTH ACCEPTANCE} in linear time.  

Let $G$ be an undirected graph on $n$ vertices,
say $v_0, v_1, \ldots, v_{n-1}$.
We assume, without loss of generality, that $G$ has no self-loops.
We create a
unary acyclic NFA $A$ as follows.  The construction consists of
four layers, numbered from 1 to $4$, with each layer having $n$ states, each
corresponding to one of $G$'s vertices.
State $i$ in layer $j$ is denoted $q_i^j$.
In the top layer (layer 1), we let $q_0^1$ be the initial state of $A$ and we
add a transition from $q_i^1$ to $q_{i+1}^1$ for $0 \leq i \leq n-2$, giving
us a linearly-connected chain of states.
Next, we add transitions from layer 1 to layer 2, layer 2 to layer 3, and
layer 3 to layer 4 as follows:  if $G$ has an edge from $v_i$ to $v_k$,
then $A$ has a transition from $q_i^j$ to $q_k^{j+1}$ for $j = 1,2,3$.
Finally, the bottom layer (layer 4) has transitions
from $q_i^4$ to $q_{i+1}^4$ for $0 \leq i \leq n-2$.  
The unique final state of $A$ is $q_{n-1}^4$.    A similar idea
was used in \cite{Abboud&Williams:2014}.

We claim that $a^{n+2}$ is accepted by $A$ if and only if there exists a
triangle in $G$.

\begin{figure}[ht]
\centerline{\includegraphics[height=10cm]{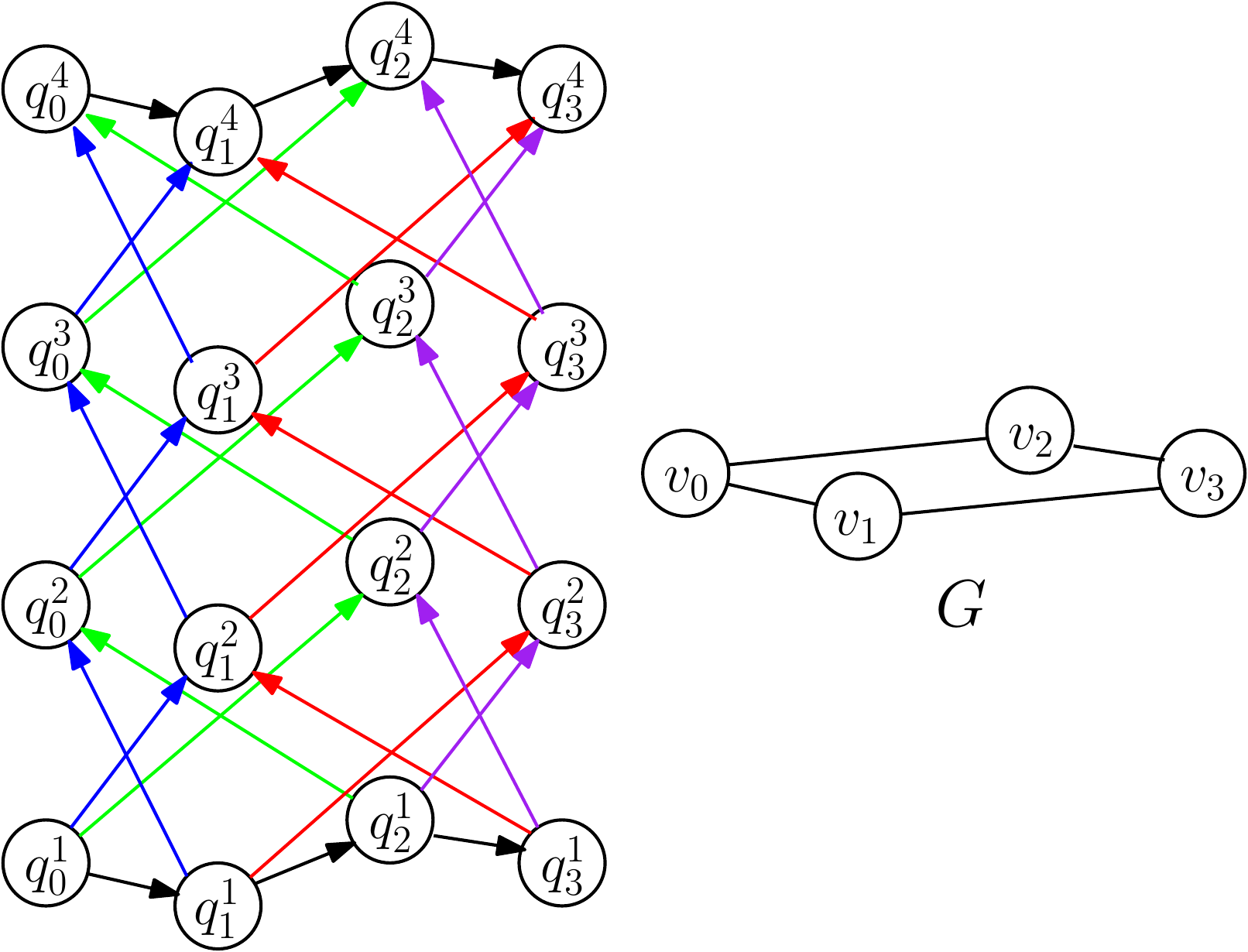}}
\caption{The reduction when $G$ is a square. Since a square has no triangle, there is no path of length $6$ from $q^1_0$ to $q^4_3$.}
\label{trianglereduction}
\end{figure}

Suppose a word $a^r$ is accepted by $A$.  Then an accepting path must begin
at the initial state $q_0^1$, follow $i$ edges in layer $1$, ending
at $q_i^1$
then transit to layer 2, arriving at
state $q_j^2$, then transit to layer 3, arriving at state
$q_k^3$, then transit to layer 4, arriving at state $q_l^4$
and finally, end at $q_{n-1}^4$.  The length of this path
is then $r = i+3+n-1-l$.  But $r = n+2$ if and only
if $i = l$.  Then $G$ has edges $(v_i, v_j)$, $(v_j, v_k)$, and
$(v_k, v_i)$ and so has the triangle $(v_i, v_j, v_k)$.

Now suppose $G$ has the triangle  $(v_i, v_j, v_k)$.  Then there
are edges $(v_i, v_j)$, $(v_j, v_k)$, and $(v_k, v_i)$.  An acceptance
path for $a^{n+2}$ is as follows:  from
$q_0^1$ in a linear chain of nodes to $q_i^1$ by a path of length $i$, a transition
to $q_j^2$, a transition to $q_k^3$, a transition to
$q_i^4$, and transitions to $q_{n-1}^4$ in a linear chain of nodes by a path
of length $n-1-i$.  The accepted word has length $i + 3 + n-1-i = n+2$.

Starting with a graph $G$ of $n$ vertices and $m$ edges, this construction
produces a unary acyclic NFA with
$4n$ vertices and $3m+2n$ edges.  

We have proved:

\begin{theorem}
There is a linear-time reduction from {\tt TRIANGLE-FREE GRAPH} to 
{\tt NFA LENGTH ACCEPTANCE}.
\end{theorem}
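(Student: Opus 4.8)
The plan is to take the explicit four-layer unary acyclic NFA $A$ built from the graph $G$ described above, pair it with the target length $\ell = n+2$, and argue two things: that the map $G \mapsto (A, n+2)$ is computable in linear time, and that $a^{n+2} \in L(A)$ if and only if $G$ contains a triangle. These two facts together are exactly the assertion of the theorem, so the proof amounts to verifying them carefully.

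For the running time, I would count the size of $A$: it has $4n$ states and $3m + 2n$ transitions. Each transition arises from a constant-time operation — walking the vertex list of $G$ produces the two intra-layer chains ($2(n-1)$ edges), and each of the $m$ edges of $G$ contributes exactly three inter-layer transitions — while $\ell = n+2$ is computed in constant time. Hence $(A, \ell)$ is produced in time $O(n+m)$, linear in the size of the input. I would also record, since the preceding discussion advertises it, that $A$ is unary (its only input letter is $a$) and acyclic (every chain edge and every inter-layer edge is directed ``forward,'' so $G(A)$ is a layered DAG); thus the reduction in fact produces instances in the unary acyclic special case.

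For correctness I would argue both directions, as sketched above. The forward direction rests on the structural observation that any accepting path of $A$ must pass through layer~1, then layer~2, then layer~3, then layer~4, visiting the last three exactly once: the only edges leaving layer~1 go to layer~2, layers~2 and~3 carry no intra-layer edges, and the sole final state lies in layer~4. So an accepting path is described by indices $0 \le i, j, k, l \le n-1$ realizing $q_0^1 \to \cdots \to q_i^1 \to q_j^2 \to q_k^3 \to q_l^4 \to \cdots \to q_{n-1}^4$, its length is $i + 3 + (n-1-l)$, and this equals $n+2$ exactly when $i = l$; such a path exists iff $G$ has edges $(v_i,v_j)$, $(v_j,v_k)$, $(v_k,v_i)$, and because $G$ has no self-loops these force $i, j, k$ pairwise distinct, i.e., a genuine triangle. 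The backward direction reverses this: from a triangle $(v_i,v_j,v_k)$ build the path $q_0^1 \to \cdots \to q_i^1 \to q_j^2 \to q_k^3 \to q_i^4 \to \cdots \to q_{n-1}^4$, of length $i + 3 + (n-1-i) = n+2$.

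The step I expect to need the most care is the forward direction's ``no shortcuts'' claim — that every accepting path genuinely traverses all four layers in order, so that the length formula $i + 3 + (n-1-l)$ is forced — together with the small but indispensable appeal to the no-self-loop hypothesis to exclude degenerate index coincidences (for instance $i=k$, which would only require a self-loop at $v_i$ rather than a triangle). The remaining ingredients — the size count and the path-length arithmetic, including the trivial degenerate cases $n \le 2$ — are routine.
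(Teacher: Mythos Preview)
Your proposal is correct and follows essentially the same approach as the paper: the same four-layer construction, the same target length $n+2$, the same path-length analysis $i+3+(n-1-l)=n+2 \iff i=l$, and the same size count of $4n$ states and $3m+2n$ transitions. You add a bit more care than the paper does---explicitly justifying the ``no shortcuts'' structure of accepting paths and invoking the no-self-loop hypothesis to ensure $i,j,k$ are pairwise distinct---but these are refinements of the same argument rather than a different route.
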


The fastest general algorithm for {\tt TRIANGLE-FREE GRAPH} known runs
in $O(n^\omega (\log n)^{1+\epsilon})$ time \cite{Itai&Rodeh:1978,Alon&Yuster&Zwick:1995,Alon&Yuster&Zwick:1997}.
It consists of computing
$M^3$, where $M$ is $G$'s adjacency matrix,
and checking if the diagonal contains a $1$.
This suggests that finding a significantly faster
algorithm for {\tt NFA LENGTH ACCEPTANCE} will require a large advance.

\section{Unary acyclic NFA enumeration}
\label{alg}

In this section we consider a related problem, which we call
{\tt UNARY ACYCLIC NFA ENUMERATION}:

\medskip

\noindent{\it Instance:}  a unary acyclic $n$-state NFA $A$.

\medskip

\noindent{\it Problem:}  to enumerate (list) the elements of $L(A)$. 

\bigskip

At first glance, this problem seems like it might be harder than
{\tt NFA LENGTH ACCEPTANCE}, since it requires checking the lengths
of all possible accepted words, rather than a single word.
Nevertheless, we give
a $O(n^\omega (\log n)^{2+ \epsilon})$-time 
algorithm for the problem.   Since
the same argument giving a linear-time reduction
from {\tt TRIANGLE-FREE GRAPH} to {\tt NFA LENGTH ACCEPTANCE}
works for reducing {\tt TRIANGLE-FREE GRAPH} to
{\tt UNARY ACYCLIC NFA ENUMERATION}, it is unlikely we can greatly improve
our algorithm, unless a significant advance is made.

The naive approach to solving 
{\tt UNARY ACYCLIC NFA ENUMERATION}
is to maintain a list $L$ of the states of $A$ (represented, say, as a bit vector)
and update this list as we read additional
symbols of input. If $A$ has $n$ states, then the longest word accepted
is of length $\leq n-1$.   To update $L$ after reading each new symbol
potentially requires a union of $n$ sets, each with at most $n$ elements.
Thus the total running time is $O(n^3)$.

We consider a different approach.
Suppose $A$ has $n$ states, labeled $q_0, q_1, \ldots, q_{n-1}$.
We create a new NFA $A' = (Q', \{ a \}, \delta', q'_0, F')$, as follows.
Let $2^k$ be the smallest power of $2$ that is $\geq n$.
Define $Q' = Q \, \cup \, \{ p_0, p_1, \ldots, p_{2^k-1} \}$.
Let $q'_0 = p_0$ be the new initial state, and, in addition to the transitions
already present in $A$, define $\delta'$ by adding additional transitions from
$p_i$ to $p_{i+1}$ for $0 \leq i < p_{2^k -1}$, and from
$p_{2^k -1}$ to $q_0$.    Let $M'$ be the adjacency matrix of $A'$.

Now $A$ accepts
a word of length $i$ if and only if there is a path of length $i$
from $q_0$ to a final state of $A$, if and only if there is a path
of length $2^k$ from $p_i$ to a final state of $A'$.    Thus we can
compute all words accepted by $A$ with a single exponentiation of
$M'$ to the appropriate power.

We now compute ${M'}^{2^k}$ using exactly $k$ Boolean matrix multiplications,
through repeated squaring.  To determine if $a^i$ is accepted by $A$,
it suffices to check the entry
corresponding to the row for $p_i$ and the columns for the final states of
$A'$.  We do this for each possible length, $0$ through $n-1$, and
so the total cost is $O(n^\omega (\log n)^{2+\epsilon} + n^2)$.

We have proved

\begin{theorem}
     If $M$ is a unary NFA that accepts a finite language $L$, we can 
enumerate the elements of $L$ in $O(n^\omega (\log n)^{2+\epsilon})$
bit operations, where $\omega$ is the optimal exponent for matrix
multiplication.
\end{theorem}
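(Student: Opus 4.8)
The plan is to reduce the enumeration task to a \emph{single} Boolean matrix exponentiation, so that the bit-complexity bookkeeping of the Proposition proved earlier can be reused almost verbatim. Write $A$ for the given unary NFA (the $M$ of the statement). Since $L(A)$ is finite, $A$ is acyclic, and every accepted word has length at most $n-1$; hence ``enumerating $L$'' just means deciding, for each $i$ with $0 \le i \le n-1$, whether $a^i \in L(A)$. The naive subset-construction update does this in $\Theta(n^3)$ worst-case time, so instead I would build a small auxiliary automaton in which all $n$ of these queries are read off from one matrix power.

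Concretely, let $2^k$ be the least power of two with $2^k \ge n$, so that $2^k < 2n$ and the construction stays linear in the number of states. Prepend to $A$ a fresh directed chain $p_0 \to p_1 \to \cdots \to p_{2^k-1} \to q_0$, make $p_0$ the new start state, keep the final states of $A$, and call the result $A'$, with adjacency matrix $M'$ (of dimension $n + 2^k = O(n)$). The key point to verify is the walk identity: a length-$2^k$ walk starting at $p_i$ must spend $2^k - i$ steps traversing the chain to reach $q_0$ and then $i$ steps inside the original transition structure of $A$; consequently there is a walk of length $2^k$ from $p_i$ to a final state of $A'$ if and only if $a^i \in L(A)$. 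By the standard path-algebra fact already invoked in the Proposition, this is precisely the condition that $({M'}^{2^k})_{p_i, f} = 1$ for some final state $f$.

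It then remains to compute ${M'}^{2^k}$ and scan the appropriate entries. Repeated squaring uses exactly $k = \lceil \log_2 n \rceil$ Boolean matrix multiplications; as in the Proposition, each product is computed over $\mathbb{Z}/(n+1)\mathbb{Z}$ with a final clamp of nonzero entries to $1$, which keeps all entries to $O(\log n)$ bits and costs $O(n^\omega (\log n)^{1+\epsilon})$ per multiplication. Multiplying by the $k = O(\log n)$ rounds gives $O(n^\omega (\log n)^{2+\epsilon})$; the concluding scan of the $O(n)$ rows $p_0, \dots, p_{n-1}$ against the final-state columns costs only $O(n^2)$, which is absorbed since $\omega \ge 2$.

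I do not anticipate a genuine obstacle: the mathematical content is essentially the gadget that turns ``accepted word of length $i$'' into ``reachable in a fixed number $2^k$ of steps,'' plus the observation that a chain of length $2^k$ can be attached without leaving the $O(n)$-state regime. The two things that need care are (i) confirming that padding up to a power of two keeps the state count linear, so the $n^\omega$ factor is unchanged, and (ii) implementing the Boolean multiplications exactly as in the Proposition so that entry sizes stay at $O(\log n)$ bits rather than blowing up through iterated integer accumulation.
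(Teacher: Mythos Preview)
Your proposal is correct and is essentially identical to the paper's own argument: the same prepended chain of $2^k$ fresh states, the same walk identity reducing ``$a^i \in L(A)$'' to a length-$2^k$ reachability from $p_i$, the same repeated squaring to compute ${M'}^{2^k}$, and the same $O(n^2)$ final scan. If anything, you are a bit more explicit than the paper about why $2^k < 2n$ keeps the state count linear and about clamping entries to avoid bit-size blowup.
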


This result previously appeared in \cite[\S 3.8]{Shallit:2009}.

\section{Hardness of NFA acceptance}

In this section, we consider the following decision problem:

\medskip

\noindent{\tt NFA ACCEPTANCE}

\medskip

\noindent{\it Input:}  An NFA $M$ of total size $m$ (states and transitions)
and an input $x$ of length $\ell$.

\medskip

\noindent{\it Question:}  Does $M$ accept $x$?

\medskip

The obvious algorithm for this problem keeps track of the current set of
states and updates it for each new input letter read; it runs
in $O(\ell m)$ time.

In this section, we show that in the case when the NFA is sparse (i.e.,
$m$ is not much larger than $n$, the number of states of the NFA),
significantly improving this algorithm would disprove the strong
exponential time hypothesis (SETH) \cite{Impagliazzo&Paturi:1999}.
However, this does not rule out an improvement when the
NFA is dense, and we leave it as an open problem to either find a
significant improvement to this algorithm, or show why such an
improvement is unlikely.

Recall the following decision problem (e.g., \cite{Williams&Yu:2014}):

\medskip

\noindent {\tt ORTHOGONAL VECTORS} 

\medskip

\noindent Input:  Two lists $(v_i)_{1 \leq i \leq n}$ and
$(w_i)_{1 \leq i \leq n}$ of boolean vectors of dimension $d$.

\medskip

\noindent Question:  do there exist $i, j$ such that
the boolean product $v_i \cdot w_j = 0$?

\bigskip

\begin{theorem}
{\tt ORTHOGONAL VECTORS} reduces in linear time and log space to acyclic
{\tt NFA ACCEPTANCE}.
\end{theorem}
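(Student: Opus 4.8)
The plan is to encode each pair $(v_i, w_j)$ as a candidate accepting path through an acyclic NFA, so that the path survives (i.e., the machine accepts) precisely when $v_i \cdot w_j = 0$. First I would design the input string $x$ to be a concatenation of $n$ blocks, one per vector $v_i$, where each block has the same fixed length (say $d$ or $d+O(1)$), together with suitable separator symbols; likewise I would build the NFA $M$ out of $n$ ``gadgets,'' one per vector $w_j$. The initial state of $M$ will nondeterministically guess an index $i$ by skipping over the first $i-1$ blocks of $x$ (consuming them with a universal ``sink-free'' loop-free path that reads any symbol), then process block $i$ through a $w_j$-gadget, then skip the remaining $n-i$ blocks, and finally accept. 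Because the whole construction must be acyclic, the ``skip'' portions cannot be self-loops; instead I would lay out $x$'s block structure as a grid of states so that reading a full block of fixed length advances one ``column,'' which is exactly the four-layer-style linear chain trick already used in Section~\ref{lower}.

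The heart of the reduction is the $w_j$-gadget that checks orthogonality of the block currently being read (which should be an encoding of some $v_i$) against the hard-coded vector $w_j$. I would encode $v_i$ as a length-$d$ string over $\{0,1\}$ (its coordinate sequence), and design the gadget so that when it is in its ``reading $v_i$ against $w_j$'' phase it moves through $d$ consecutive states $r_1^{(j)}, \ldots, r_{d+1}^{(j)}$, where the transition from $r_t^{(j)}$ to $r_{t+1}^{(j)}$ exists on input symbol $b$ if and only if $(w_j)_t = 0$ or $b = 0$. Thus the path from $r_1^{(j)}$ to $r_{d+1}^{(j)}$ exists exactly when there is no coordinate $t$ with $(w_j)_t = 1$ and $(v_i)_t = 1$, i.e., exactly when $v_i \cdot w_j = 0$. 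Nondeterminism lets $M$ guess both $i$ (where to enter a gadget) and $j$ (which gadget to enter); $M$ accepts $x$ iff some guess succeeds iff some orthogonal pair exists.

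Then I would verify the resource bounds: $x$ has length $O(nd)$, and $M$ has $O(nd)$ states and $O(nd)$ transitions since each gadget contributes $O(d)$ states and $O(d)$ transitions and the block-skipping apparatus contributes $O(nd)$ states arranged in linear chains. The construction is clearly computable by a deterministic log-space transducer (it is a handful of nested counters over indices bounded by $n$ and $d$), and the NFA is acyclic because every transition strictly increases a ``position in $x$'' coordinate. I would close by noting that, combined with the known conditional lower bound for {\tt ORTHOGONAL VECTORS} under SETH, this shows no $O((nd)^{2-\delta})$-time algorithm for acyclic {\tt NFA ACCEPTANCE} on sparse instances unless SETH fails, matching the $O(\ell m)$ bound up to subpolynomial factors.

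I expect the main obstacle to be getting the synchronization exactly right while keeping the machine acyclic: the block-skipping paths must consume precisely the right number of symbols so that a gadget starts reading at a true block boundary, and no ``off-by-one'' slippage lets a spurious path read part of one block and part of the next and thereby fake orthogonality. Handling this cleanly is why I would introduce explicit separator symbols between blocks (so that only correctly aligned paths can traverse a separator transition) and possibly pad each $v_i$-encoding to a common length; verifying that misaligned paths necessarily get stuck at some separator is the one place the argument needs genuine care rather than bookkeeping.
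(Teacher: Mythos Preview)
Your proposal is correct and follows essentially the same approach as the paper: encode one list of vectors in the input string as fixed-length blocks, hard-code the other list as $O(d)$-size orthogonality-checking gadgets in the NFA, and use linear ``skip'' chains before and after the gadget layer so that nondeterminism guesses both indices. The only cosmetic differences are that the paper puts the $w_j$'s in the input and the $v_i$'s in the gadgets (the reverse of your choice), and enforces block alignment structurally---branching to the gadget layer is only allowed from designated states $a_j$ spaced $d+2$ apart on the pre-chain, and re-entry to the post-chain is only at designated states $b_j$---rather than via a distinguished separator symbol; both mechanisms achieve the same synchronization you identified as the one delicate point.
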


\begin{proof}
The idea is to create an NFA $M$ that accepts the input $00w_100w_2\cdots 00w_n$
if and only if there exist $i,j$ such that $v_i \cdot w_j = 0$.

The NFA is built out of some simple DFA gadgets $M_i$, one for each $v_i$.
On input $w$, the DFA $M_i$ accepts iff $w \cdot v_i = 0$.  If the vectors are of length $d$, this can
be done with $2d+1$ states.  

\begin{figure}[ht]
\centerline{\includegraphics[width=10cm]{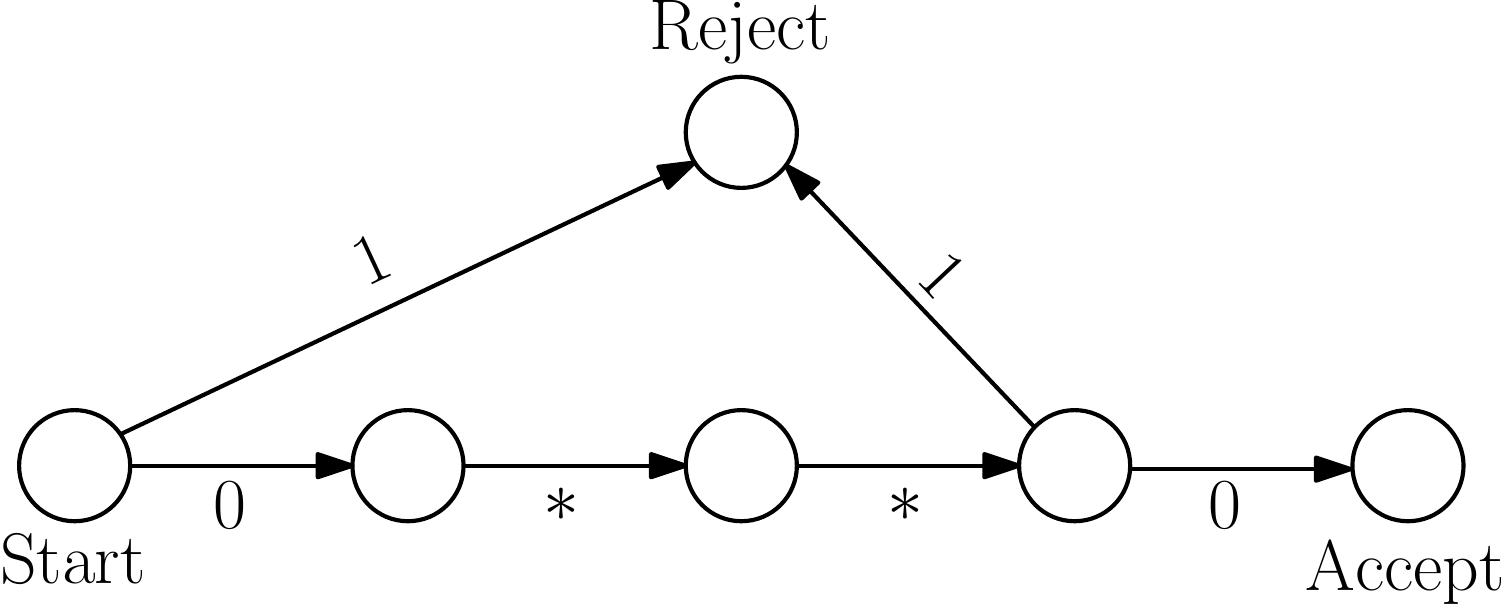}}
\caption{The gadget for testing whether $w \cdot v$ is $0$ when $w \in \{0,1\}^4$ and $v = \{1,0,0,1\}$. Here $*$ means that the NFA can take this edge regardless of what the input bit is.}
\label{orthogonalitygadget}
\end{figure}

The NFA $M$ has the following layers:
\begin{enumerate}
\item A path of length $(n-1)(d+2)$, where we assign the label $a_{j}$ to the $(j-1)(d+2)$-th state on this path. We set the start state to be $a_1$.
\item A special state $x$.
\item The gadgets $M_i$, except for their accept states.
\item A special state $y$ that replaces the accepting state for each gadget. 
\item A path of length $(n-1)(d+2)$, where we assign the label $b_{j}$ to the $(j-1)(d+2)$-th state on this path. We set the accept state to be $b_n$.
\end{enumerate} 
The transitions for $M$ are as follows. Except for transitions within the gadgets $M_j$, all of these transitions can be made regardless of the input.
\begin{enumerate}
\item For the path containing the states $a_j$, at each $a_j$ we either choose to transition to $x$, which means that we read $w_j$ from the input, or we can transition to the next state of the path (unless we are at $a_n$, in which case we can only transition to $x$). At each other state in the path, we can only transition to the next state of the path.
\item From $x$ we can transition to the start state of any gadget $M_i$, which means that we will check $v_i \cdot w_j$.
\item We have the transitions for each gadget $M_i$, except that if we would transition to the accept state of $M_i$, we instead transition to $y$.
\item From $y$ we can transition to any $b_j$. This flexibility allows the NFA to end up at the accepting state after the correct number of steps.
\item For the path containing the states $b_j$, at each state we can only transition to the next state in the path.
\end{enumerate}
The total number of states and transitions are both $O(dn)$.
\end{proof}

\begin{figure}[ht]
\centerline{\includegraphics[width=12cm]{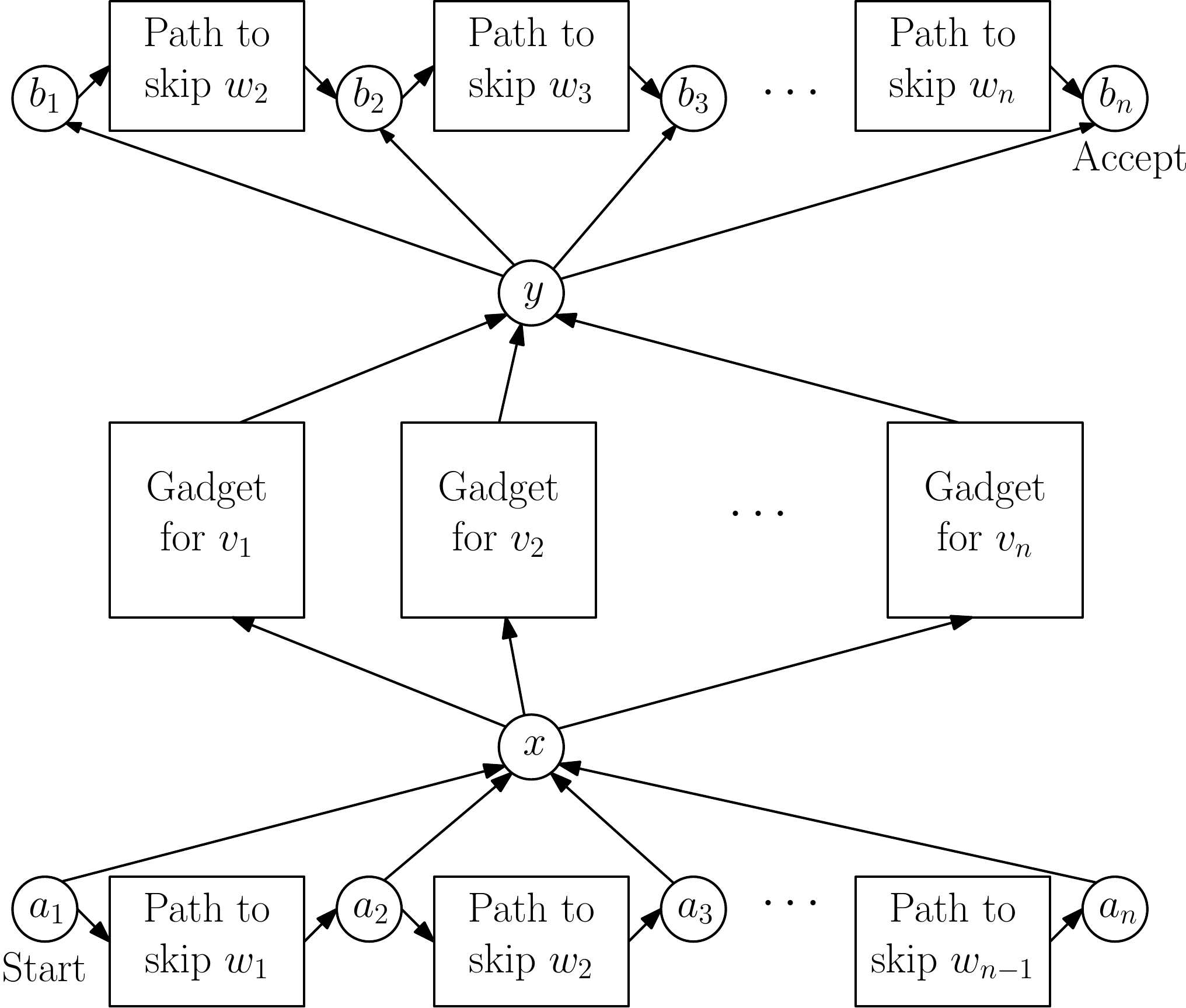}}
\caption{Structure of the acyclic NFA solving orthogonal vectors.}
\label{orthogonalitytester}
\end{figure}

\begin{corollary}
If there is an algorithm for {\tt NFA ACCEPTANCE} that runs 
in $O(n^{2-\epsilon})$ time, then {\rm SETH} is false.
\end{corollary}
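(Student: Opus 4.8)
The plan is to chain the reduction of the previous theorem with the standard SETH-hardness of {\tt ORTHOGONAL VECTORS}. Recall the classical fact (see \cite{Williams&Yu:2014}) that, assuming SETH, for every constant $\delta>0$ there is a constant $c$ such that {\tt ORTHOGONAL VECTORS} on two lists of $n$ boolean vectors of dimension $d = c\log n$ cannot be solved in time $O(n^{2-\delta})$. (Sketch: a CNF-SAT instance on $N$ variables splits, by enumerating all assignments to each half of the variables, into an OV instance with $n = 2^{N/2}$ vectors whose dimension is the number of clauses --- which may be taken to be $\mathrm{polylog}(n)$, indeed $\Theta(\log n)$ after sparsification --- with orthogonality of $v_i$ and $w_j$ corresponding to the two partial assignments jointly satisfying every clause; a subquadratic OV algorithm would then beat exhaustive search for CNF-SAT.)

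Now suppose, for contradiction, that {\tt NFA ACCEPTANCE} admits an algorithm running in time $O(s^{2-\epsilon})$, where $s$ denotes the total instance size (states, transitions, and the length of the input word). Given an {\tt ORTHOGONAL VECTORS} instance with $n$ vectors of dimension $d$, apply the reduction of the previous theorem: in linear time and logarithmic space it outputs an acyclic NFA $M$ and the word $00w_1 00 w_2 \cdots 00 w_n$. By the size bookkeeping in that proof, $M$ has $O(dn)$ states and $O(dn)$ transitions and the word has length $n(d+2) = O(dn)$, so $s = O(dn)$; feeding this to the hypothetical algorithm decides the OV instance in time $O((dn)^{2-\epsilon})$.

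Finally, specialize $d$ to $c\log n$, where $c$ is the constant from the first paragraph taken for $\delta = \epsilon/2$. Then $O((dn)^{2-\epsilon}) = O\bigl((c\log n)^{2-\epsilon}\, n^{2-\epsilon}\bigr) = O(n^{2-\epsilon/2})$ for all sufficiently large $n$, contradicting the SETH-hardness of {\tt ORTHOGONAL VECTORS}. Hence no such algorithm for {\tt NFA ACCEPTANCE} exists unless SETH fails. (Note this already applies to acyclic NFAs, since the reduction produces one.)

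The only step needing care is the parameter accounting: one must fix the meaning of ``$n$'' in the corollary as the total instance size --- unambiguous here since the NFA and the input word produced by the reduction both have size $\Theta(dn)$ --- and must check that the polylogarithmic overhead $d = \Theta(\log n)$ from the reduction is absorbed into the exponent, so that $n^{2-\epsilon}\,\mathrm{polylog}(n)$ is genuinely subquadratic. The rest is a routine composition of reductions.
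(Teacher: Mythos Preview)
Your proof is correct and follows essentially the same route as the paper: compose the linear-time reduction from {\tt ORTHOGONAL VECTORS} to {\tt NFA ACCEPTANCE} (the preceding theorem) with the Williams result that a truly subquadratic algorithm for {\tt ORTHOGONAL VECTORS} refutes SETH. The paper's own proof is a two-sentence pointer to Williams's theorem; your version spells out the parameter accounting (the $d=\Theta(\log n)$ blowup absorbed into the exponent, and the meaning of $n$ as total instance size) that the paper leaves implicit.
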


\begin{proof}
Such an algorithm would imply an algorithm for {\tt ORTHOGONAL
VECTORS} that runs in the same time bound.  We then use
a result of Williams (e.g., \cite{Williams:2005} or
\cite[Thm.~1, p.~22]{Williams:2015}).
\end{proof}

\section{Conclusion}
In this paper, we analyzed the complexity of the acyclic NFA acceptance problem, which asks if an acyclic NFA accepts a given input. In the case where the NFA is unary, we showed that there is an algorithm using matrix multiplication that runs in $\tilde{O}(n^{\omega})$ time, which in fact enumerates all input lengths that are accepted. We also showed that we can reduce the triangle detection problem to unary acyclic NFA acceptance; improving on this algorithm would imply a breakthrough for triangle detection. In the general case, we show that significantly improving the trivial $O(nm)$ algorithm (where $n$ is the input length and $m$ is the number of edges in the NFA) when $m$ is $O(n)$ would imply that the strong exponential time hypothesis (SETH) is false.  

That said, there are a number of open questions remaining. First, what bounds can we show for acyclic NFA acceptance when the NFA is dense? In particular, can we prove a $\tilde{\Omega}(n^3)$ lower bound under some assumption? Second, can we reduce acyclic unary NFA acceptance and acyclic NFA acceptance to other problems?

\bibliographystyle{new}
\bibliography{abbrevs,pot}

\end{document}